\documentclass{article}

\newcommand{\repourl}{\url{https://github.com/felizzi/post-progression-transport-WMHS-NeurIPS-2026}}  

\usepackage[final,dblblindworkshop]{neurips_2026}
\workshoptitle{World Models for High-Stakes Health (WMHS): Reliable Clinical
Trial Simulation and Intervention-Aware Reasoning}

\usepackage[utf8]{inputenc}
\usepackage[T1]{fontenc}
\usepackage{hyperref}
\usepackage{url}
\usepackage{booktabs}
\usepackage{amsmath}
\usepackage{amssymb}
\usepackage{amsthm}
\usepackage{microtype}
\usepackage{graphicx}
\usepackage{multirow}
\usepackage{xcolor}
\usepackage{tikz}
\usetikzlibrary{arrows.meta,positioning,calc}
\usepackage{tikz-cd}

\newtheorem{proposition}{Proposition}
\newtheorem{remark}{Remark}

\newcommand{\rmst}{\mathrm{RMST}}
\newcommand{\SO}{\textsf{S1}}
\newcommand{\ST}{\textsf{S2}}
\newcommand{\SH}{\textsf{S3}}
\newcommand{\tstar}{t^{\star}}

\title{Post-Progression Treatment Regimes and the Limits of\\
Cross-Trial Survival Transport}
\author{%
  Federico Felizzi \\
  SIIAM, Rome, Italy \\
  enGene, Basel, Switzerland \\
  \texttt{federico.felizzi@gmail.com} \\
  \And
  Noman Paracha \\
  Bayer AG, Basel, Switzerland
}

\begin{document}\maketitle

\begin{abstract}
Models that simulate overall survival (OS) from progression-free
survival (PFS) are fitted to published trial reports and used to
transport treatment effects to new populations. That OS depends on what
happens after progression is well known; we show what this implies for
transport. The quantity transport requires depends on the
\emph{post-progression treatment regime}---the rates at which patients
receive subsequent therapy, indexed by biomarker stratum and randomised
arm---and this object is systematically absent from published reports.
Two families of correction are in routine use and neither reaches it.
Crossover adjustment targets the counterfactual in which no patient
receives subsequent therapy, a world no target population inhabits, and
thereby eliminates precisely the information distinguishing source from
target. Population adjustment matches baseline covariates, and a regime
belongs to a trial rather than to a patient, so reweighting individuals
cannot convert one regime into another. We establish the second point
across four adjustment approaches spanning distinct function classes---
moment-matching MAIC, outcome-regression STC, augmented inverse
probability weighting, and a gradient-boosted density ratio---which
carry near-identical bias, showing the failure to be informational
rather than methodological. We exhibit a regime that is exactly balanced
across arms in the marginal, invisible to standard switching assessment,
and nonetheless not transportable. Over plausible target regimes the
effect spans zero even with the mechanism known exactly, so without
knowledge of the target's post-progression landscape the sign of the
treatment effect is undetermined; supplying the target regime to a
regime-aware estimator recovers it. All results
are from simulation and come from a single seeded script with a test
suite, available at \repourl.
\end{abstract}

\section{Introduction}
Cross-trial transport underlies indirect treatment comparison, network
meta-analysis, and synthetic control arms
\citep{signorovitch2010,phillippo2017}, and is increasingly performed
with learned representations of patients or trials
\citep{johansson2016,shalit2017}. A patient world model that generates
OS conditional on PFS is, in effect, an estimator of this transport,
fitted to published trial reports. We ask what such a model can learn
from that evidence base, and identify a specific and consequential gap.

\paragraph{The transport problem.} A randomised trial identifies a
treatment effect in the population it enrolled. Transport asks for the
effect in a different \emph{target} population---another trial's
population, a reimbursement jurisdiction, or a real-world cohort---usually
with individual data from the source trial and only summary data from
the target. Standard methods assume that, conditional on a set of
baseline effect modifiers, the treatment effect is the same in source
and target; differences in the distribution of those modifiers are then
removed by reweighting or regression \citep{colnet2024}. The assumption
fails whenever something that modifies the effect differs between
source and target but is not a baseline covariate.

\paragraph{The core argument.} OS is the sum of PFS and post-progression
survival (PPS). Subsequent therapy is given after progression, so it can
change PPS but not PFS. If the probability of receiving it differs by
arm, by biomarker stratum, or between trials, OS effects differ even when
the drug and the biology are identical. The probability is set by the
treatment landscape a trial runs in, not by any patient characteristic,
so methods that adjust for patient characteristics cannot account for
it.

\paragraph{A motivating pattern.} Consider a trial in which the
progression-free survival hazard ratio is essentially identical in a
biomarker-positive subgroup and its complement, yet the overall survival
hazard ratios diverge sharply---a pronounced OS benefit in the
biomarker-negative stratum and none in the biomarker-positive---while the
marginal rate of subsequent therapy is balanced across arms. The natural
reading is that the biomarker modifies the drug's effect on survival. Our
results show that this reading is not licensed by the data. Because PFS
precedes subsequent therapy it is invariant to the post-progression
regime, whereas OS is not (Figure~\ref{fig:km}); and a regime that is
balanced in the \emph{marginal}---the quantity a standard switching
assessment inspects---can be sharply arm-differential \emph{within}
strata in offsetting directions (regime \ST{},
Section~\ref{sec:regimes}). Holding drug and biology fixed and changing
only the regime, the biomarker-positive OS hazard ratio in our simulation
moves from $0.973$ to $1.235$ to $1.518$ while the PFS hazard ratio does
not move (Table~\ref{tab:published}): the regime alone can produce a
stratum-wise divergence of exactly this kind.
A careful analyst would not stop at the marginal check but would examine
the four arm~$\times$~biomarker cells. That is the right analysis, and it
fails only because published reports do not contain the subsequent-therapy
rate in each of the four cells: they give at most a rate by arm. With the
four cell rates the regime's contribution would be visible at once;
without them, a stratum-specific divergence in OS hazard ratios cannot be
attributed to biology rather than to the within-stratum, by-arm
subsequent-therapy pattern. This paper makes that non-identification
precise.

The quantity transport requires is the treatment effect in the target
population \emph{under the target's own post-progression treatment
regime}. We define a regime as the table $q[m][t]$ giving the
probability that a patient in biomarker stratum $m$ and randomised arm
$t$ receives subsequent targeted therapy after progression. Regimes
differ across trials for reasons unrelated to the drug under study:
approval timelines, reimbursement, geography, and biomarker-based
eligibility for later agents. The gap is not hypothetical. A
cross-sectional review of 275 published oncology trials and 77
registration trials leading to FDA approval found post-progression
treatment assessable in only $36\%$ and $48\%$ of them, and, where
assessable, judged it below the standard of care in $55\%$ and $76\%$
\citep{olivier2023}. The regime is thus often unreported, and often
different from the setting into which results are transported.

Two correction families are in routine use, and they miss the required
quantity in opposite directions. Crossover adjustment---RPSFT
\citep{robins1991}, IPCW \citep{robins2000}, two-stage estimation
\citep{white1999}; see \citet{latimer2014}---reconstructs survival had no
patient received subsequent therapy, and so discards the regime.
Population adjustment---MAIC \citep{signorovitch2010}, STC, and their
representation-learning descendants---reweights or regresses on baseline
covariates, but a regime is a conditional distribution belonging to the
trial, not a baseline covariate.

\paragraph{Contributions.}
\begin{enumerate}\itemsep1pt
\item We exhibit a regime (\ST{}) that is balanced across arms in the
  marginal, and so passes a standard treatment-switching assessment, but
  imbalanced within biomarker strata in offsetting directions. It leaves
  the trial's own estimate nearly unbiased yet is not transportable
  (Sections~\ref{sec:regimes}--\ref{sec:transport}).
\item We show that the failure of population adjustment is
  informational, not methodological: four approaches spanning moment
  matching, outcome regression, doubly robust augmentation, and flexible
  machine learning carry near-identical bias (Section~\ref{sec:four}).
\item We show that crossover adjustment, even with oracle knowledge,
  leaves residual bias larger than the effect being estimated, because
  its estimand is regime-free (Section~\ref{sec:crossover}).
\item We show that over plausible target regimes the effect spans zero
  even with the mechanism known, and that supplying the target regime to
  regime-aware and off-policy estimators recovers it
  (Section~\ref{sec:idset}).
\item We derive a reporting recommendation and an interim remedy:
  publish subsequent therapy by stratum \emph{and} arm, and until then
  report transported OS as a set over plausible target regimes
  (Section~\ref{sec:implications}).
\end{enumerate}

\paragraph{Relation to prior work.} That OS depends on what happens
after progression is well known to trialists and health technology
assessment (HTA) analysts, and that differential subsequent therapy
biases indirect comparison is recognised in applied HTA, where ad hoc
corrections appear \citep{clear2025}. It is not modelled in transport
for a pragmatic reason: the required data are not in published reports.
The general statement that unobserved shifted effect modifiers defeat
baseline adjustment, and that standard of care drifts between trial
eras, is made by \citet{proximal2024} and reviewed by
\citet{colnet2024}. We do not claim to identify the problem. Our
contribution is what follows from it formally: OS transport is not
identified without the regime even when every baseline covariate is
matched (Proposition~\ref{prop:notreweightable}); the failure can be
invisible to the diagnostics practitioners run (\ST{}); no choice of
estimator that ignores it repairs it; the two correction families fail
for complementary reasons; and the regime alone leaves the sign of the
effect undetermined, while knowing it suffices to recover the effect.

\section{Model and estimand}
\label{sec:model}
We work in the accelerated failure time (AFT) framework \citep{wei1992},
in which covariates act multiplicatively on time rather than on the
hazard: a model of the form $\log T = \mu + \beta^\top Z + \varepsilon$,
so that $e^{\beta}$ is a time ratio. AFT is the natural choice here
because OS decomposes additively in time,
$\mathrm{OS}=\mathrm{PFS}+\mathrm{PPS}$, which a time-scale model
represents directly and a proportional hazards model does not. We
therefore model PFS and PPS separately and add them, rather than
regressing OS on PFS. For patient $i$ with randomised treatment
$T_i\sim\mathrm{Bern}(1/2)$, biomarker $M_i\sim\mathrm{Bern}(\pi)$, and
baseline covariates $X_i\in\mathbb R^5$:
\begin{align}
\log S_i &= \log S_0 + \delta T_i + b^\top (X_i-\bar X)
   + \varepsilon^S_i, \label{eq:pfs}\\
A_i &\sim \mathrm{Bern}\big(q[M_i][T_i]\big), \label{eq:regime}\\
\log \mathrm{PPS}_i &= \kappa + \lambda_{M_i}\log(S_i/S_0)
   + \theta_{T,M_i}T_i + \theta_A A_i + c^\top (X_i-\bar X)
   + \varepsilon^P_i, \label{eq:pps}\\
Y_i &= S_i + \mathrm{PPS}_i . \label{eq:os}
\end{align}
Here $S$ denotes PFS, $Y$ denotes OS, and $A$ denotes subsequent
targeted therapy. Equation~\eqref{eq:pfs} is an AFT model for PFS: $S_0$
is the control-arm median PFS at the source covariate mean $\bar X$
($6$ months), $\delta$ the treatment's log time ratio on PFS, and $b$
the covariate effects. Equation~\eqref{eq:regime} draws subsequent
therapy from the regime. Equation~\eqref{eq:pps} is an AFT model for
PPS: $\kappa$ sets its baseline scale, $\theta_A>0$ is the log time
ratio of subsequent therapy on PPS, and $c$ the covariate effects. The
errors are independent Gaussian, $\varepsilon^S_i\sim N(0,0.45^2)$ and
$\varepsilon^P_i\sim N(0,0.55^2)$, so PFS and PPS are log-normal; all
parameter values are in the released code.

Two parameters carry the mechanism linking the endpoints:
$\lambda_m$, the elasticity of post-progression survival to PFS, where
$\lambda=0$ means a PFS gain passes through to OS one-for-one and
$\lambda=-1$ means it is fully absorbed by shorter survival after
progression; and $\theta_{T,m}$, a direct treatment effect on PPS
bypassing the surrogate. Only the combination
$\lambda_m\delta+\theta_{T,m}$ is visible at a fixed treatment effect
size, a point we return to in Section~\ref{sec:idset}. Because $\delta$
is a log time ratio, it is not a log hazard ratio; the two coincide only
for Weibull or exponential survival.

Administrative censoring follows uniform accrual and a fixed analysis
date. Because $A$ enters only \eqref{eq:pps}, PFS is invariant to the
regime by construction, a fact we exploit throughout.

\paragraph{Notation for the transport problem.} Write $\mathcal S$ for
the source trial and $\mathcal T$ for the target population. A study is
characterised by the triple
\begin{equation}
\big(\,P(X),\ \mathcal M,\ q\,\big),
\qquad
\mathcal M = \{\delta,\ \lambda_m,\ \theta_{T,m},\ \theta_A,\ \kappa\},
\label{eq:triple}
\end{equation}
where $P(X)$ is the baseline covariate distribution including biomarker
prevalence $\pi$, $\mathcal M$ collects the structural parameters of
\eqref{eq:pfs}--\eqref{eq:os}, and $q$ is the post-progression regime of
\eqref{eq:regime}. Cross-trial transport assumes $\mathcal M$ is shared
and adjusts for the difference in $P(X)$. The premise of this paper is
that $q$ also differs, and that it is neither shared nor adjusted for.

\paragraph{Estimand.} We report differences in restricted mean survival
time at horizon $\tstar=36$ months, estimated by weighted Kaplan--Meier
with pooled tie handling. RMST is used rather than the hazard ratio
because it is collapsible \citep{royston2013,uno2014}: a marginal hazard
ratio is not a weighted average of stratum-specific hazard ratios even
absent confounding, which breaks the population reweighting transport
requires. The target estimand is
\begin{equation}
\tau_{\mathcal T}
 = \rmst_1(\tstar) - \rmst_0(\tstar)
 \quad\text{in population } (\pi_{\mathcal T}, X_{\mathcal T}),
 \text{ under regime } q_{\mathcal T},
\label{eq:estimand}
\end{equation}
where $\rmst_t(\tstar)$ is the mean OS truncated at $\tstar$ in arm $t$.
The final clause is what this paper concerns. Hazard ratios are used
only descriptively, to show what a trial would report
(Section~\ref{sec:regimes}).

\section{Three regimes, and what a trial would report}
\label{sec:regimes}
Table~\ref{tab:regimes} defines three regimes. In \SO{} the arms are
balanced within each stratum, with biomarker-positive patients receiving
more subsequent therapy---realistic, since a biomarker often confers
eligibility for a targeted agent. In \SH{} the control arm receives more
in both strata: textbook crossover, and what
\citet{robins1991,white1999} were designed to repair. \ST{} is the case
of interest: the arms differ within strata by $+0.15$ and $-0.28$, in
opposite directions, tuned so the marginal rates coincide at the source
prevalence.

\begin{table}[t]\centering
\caption{Post-progression regimes, given as treatment arm / control arm.
A standard switching assessment inspects the marginal difference;
observed marginals are from $n=240{,}000$ at prevalence $0.35$.}
\label{tab:regimes}
\begin{tabular}{lccrc}
\toprule
regime & biomarker $-$ & biomarker $+$ & marginal diff. & flagged? \\
\midrule
\SO{} balanced   & 0.20 / 0.20 & 0.60 / 0.60 & $+0.0005$ & no \\
\ST{} offsetting & 0.40 / 0.25 & 0.40 / 0.68 & $-0.0012$ & no \\
\SH{} classic    & 0.10 / 0.50 & 0.30 / 0.80 & $-0.4345$ & yes \\
\bottomrule
\end{tabular}
\end{table}

Table~\ref{tab:published} shows what a publication would report for each.
PFS hazard ratios are identical across regimes, as they must be:
subsequent therapy cannot affect an endpoint that precedes it. OS
hazard ratios are not. In the biomarker-positive stratum they run
$0.973$, $1.235$, $1.518$---from no apparent effect to apparent harm---
with identical biology, identical drug, and identical PFS results.
Figure~\ref{fig:km} shows the same fact as survival curves.

\begin{table}[t]\centering
\caption{Simulated trials as a publication would report them
($n=240{,}000$, prevalence $0.35$). Biology and drug are identical across
the three panels; only the post-progression regime differs.}
\label{tab:published}
\begin{tabular}{llccc}
\toprule
regime & subgroup & PFS HR & OS HR & OS events \\
\midrule
\multirow{3}{*}{\SO{} balanced}
 & overall        & 0.630 & 0.575 & 38.0\% \\
 & biomarker $-$  & 0.633 & 0.437 & 38.4\% \\
 & biomarker $+$  & 0.624 & 0.973 & 37.4\% \\
\midrule
\multirow{3}{*}{\ST{} offsetting}
 & overall        & 0.630 & 0.582 & 36.9\% \\
 & biomarker $-$  & 0.633 & 0.384 & 36.1\% \\
 & biomarker $+$  & 0.624 & \textbf{1.235} & 38.3\% \\
\midrule
\multirow{3}{*}{\SH{} classic}
 & overall        & 0.630 & 0.841 & 36.8\% \\
 & biomarker $-$  & 0.633 & 0.616 & 36.1\% \\
 & biomarker $+$  & 0.624 & \textbf{1.518} & 38.1\% \\
\bottomrule
\end{tabular}
\end{table}

\begin{figure}[t]\centering
\includegraphics[width=\linewidth]{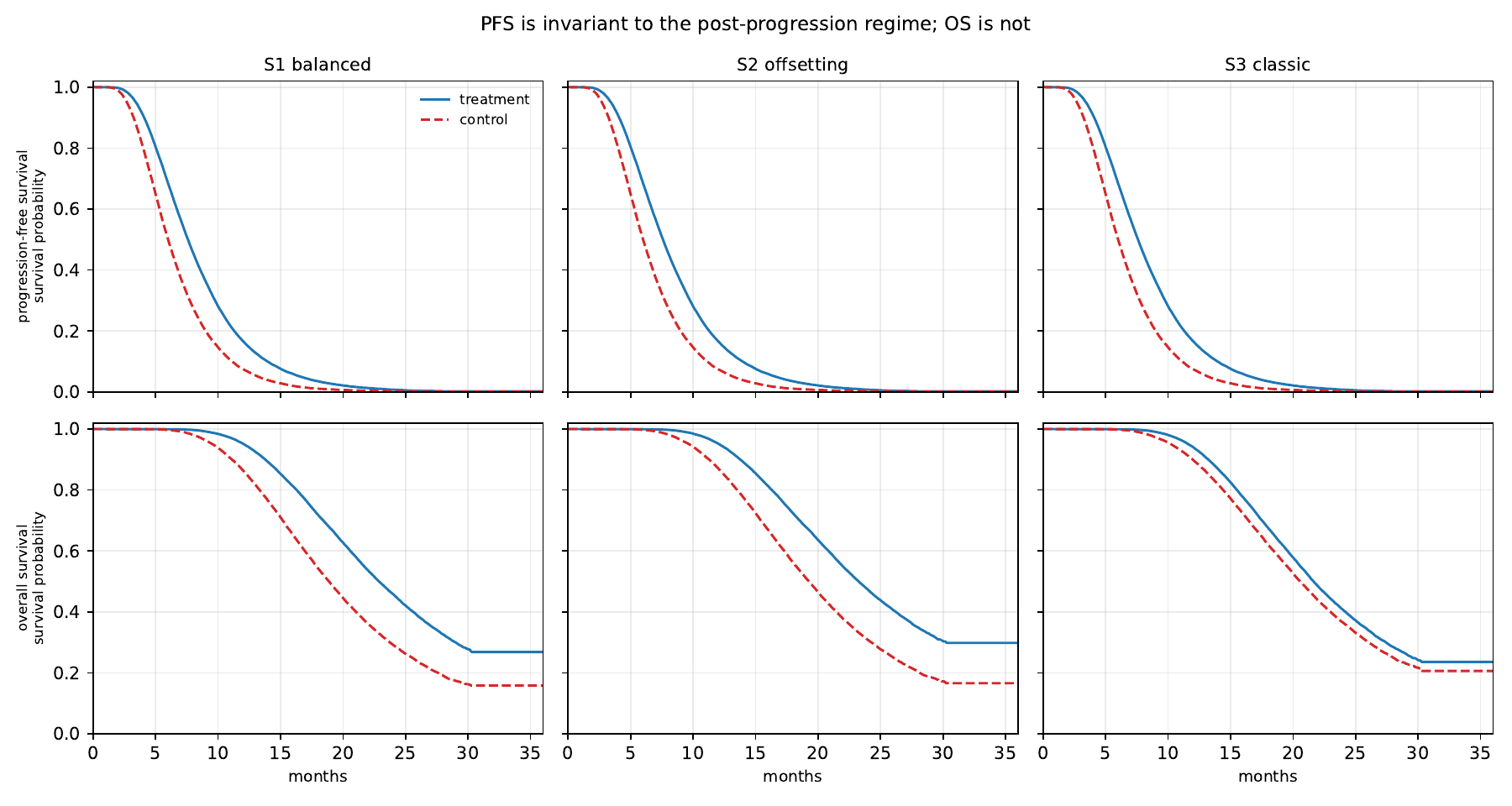}
\caption{Kaplan--Meier estimates by arm for each regime ($n=120{,}000$
per trial; at this size the step function appears smooth). The PFS
panels are identical across regimes; the OS panels are not. Subsequent
therapy cannot affect an endpoint that precedes it, but it reshapes the
endpoint that follows. Each curve is flat beyond $30.4$ months, the
maximum follow-up under administrative censoring.}
\label{fig:km}
\end{figure}

\begin{remark}[Apparent surrogacy modification is manufacturable]
\label{rem:surrogacy}
The stratum-specific translation of a PFS effect into an OS effect---the
object of surrogate validation \citep{prentice1989,buyse2000}, and of
claims that a biomarker modifies surrogacy---varies across
Table~\ref{tab:published} with no biological difference whatsoever.
Meta-analytic surrogate validation pools trials across eras with
differing regimes, so the resulting association mixes biology with
policy.
\end{remark}

Within the trial, \ST{} is nearly as harmless as the balanced \SO{}: its
offsetting imbalance cancels at the prevalence where it was constructed,
leaving a bias of $-0.111$ months relative to the no-subsequent-therapy
effect, against $-0.054$ for \SO{} and $-2.693$ for \SH{}.
Section~\ref{sec:transport} shows that it is nonetheless not
transportable, because that cancellation is prevalence-specific.

\section{Transport under population adjustment}
\label{sec:transport}
We simulate a source trial (prevalence $0.35$) and a target population
(prevalence $0.70$, shifted baseline covariates), pair them across
regimes, and apply matching-adjusted indirect comparison (MAIC)
\citep{signorovitch2010}. MAIC reweights source patients so that their
weighted covariate means equal the published target means:
$w_i=\exp\{a^\top(X_i-\mu_{\mathcal T})\}$, with $a$ solving the
moment-matching equations. It assumes that every effect modifier is
among the matched covariates. Five baseline covariates plus biomarker
prevalence are matched, achieving an effective sample size (ESS) of
$49.3\%$. Table~\ref{tab:transport} reports bias over 20 independent
replications.

\begin{table}[t]\centering
\caption{Transport bias (months of RMST), mean over 20 replications at
$n=60{,}000$. Baseline covariates and biomarker prevalence are matched
exactly in every row.}
\label{tab:transport}
\begin{tabular}{llrrr}
\toprule
source & target & mean bias & sd & mean ESS \\
\midrule
\SO{} & \SO{} & $+0.092$ & $0.143$ & 49.3\% \\
\ST{} & \ST{} & $+0.022$ & $0.168$ & 49.3\% \\
\SO{} & \ST{} & $\mathbf{+0.949}$ & $0.132$ & 49.3\% \\
\ST{} & \SO{} & $\mathbf{-0.835}$ & $0.179$ & 49.3\% \\
\SH{} & \SO{} & $-2.748$ & $0.169$ & 49.3\% \\
\bottomrule
\end{tabular}
\end{table}

Matched-regime bias is centred near zero, with a replication standard
deviation of about $0.15$ months; that standard deviation is the yardstick
against which the mismatched rows should be read, and they sit five to six
standard deviations away.

\begin{proposition}[Regimes are not reweightable]
\label{prop:notreweightable}
Let $Z_i=(X_i,M_i,T_i,S_i,Y_i,A_i)$ together with censoring indicators
denote the observed source data for patient $i$, and let
$\hat\tau_w$ be any estimator formed by reweighting source individuals,
$\hat\tau_w=\sum_i w_i\,g(Z_i)\big/\sum_i w_i$, for an arbitrary
function $g$ and weights $w_i=w(X_i)$ depending on the covariates only.
Then
\begin{equation}
\hat\tau_w \;=\; \Phi\big(P_{\mathcal S}(Z),\ X_{\mathcal T}\big),
\qquad
\frac{\partial \hat\tau_w}{\partial q_{\mathcal T}} \;=\; 0 ,
\label{eq:noreweight}
\end{equation}
whereas the estimand \eqref{eq:estimand} generically satisfies
$\partial\tau_{\mathcal T}/\partial q_{\mathcal T}\neq 0$ when
$\theta_A\neq 0$. Hence, across targets that differ only in
$q_{\mathcal T}$ and have different $\tau_{\mathcal T}$, $\hat\tau_w$ is
consistent for at most one.
\end{proposition}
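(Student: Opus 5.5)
The argument has three parts: a measurability statement about the estimator, a derivative computation for the estimand, and a one-line pigeonhole conclusion.

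\emph{Step 1: the estimator does not depend on $q_{\mathcal T}$.} The weights $w_i=w(X_i)$ are chosen using only target summaries of $X$, for example the MAIC moments $\mu_{\mathcal T}$. The summands $g(Z_i)$ are functions of source data only. Hence $\hat\tau_w$ is a measurable function of the source sample and of $X_{\mathcal T}$. By the law of large numbers for the ratio $\sum_i w_ig(Z_i)/\sum_i w_i$, its limit is
\[
E_{\mathcal S}[w(X)g(Z)]/E_{\mathcal S}[w(X)],
\]
which is a functional $\Phi$ of $P_{\mathcal S}(Z)$ and of the target covariate law. The source law $P_{\mathcal S}(Z)$ is generated by \eqref{eq:pfs}--\eqref{eq:os} with the source regime $q_{\mathcal S}$. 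The target regime $q_{\mathcal T}$ appears nowhere in this construction. Holding $(P_{\mathcal S},X_{\mathcal T})$ fixed and varying $q_{\mathcal T}$ therefore leaves $\hat\tau_w$ unchanged, both in finite samples and in the limit, which gives \eqref{eq:noreweight}. The one point to be careful about is that the weights may depend on $X$ jointly with $M$, since prevalence is matched. I would treat $M$ as part of the covariate vector; it is still a baseline quantity and not $q_{\mathcal T}$.

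\emph{Step 2: the estimand does depend on $q_{\mathcal T}$.} Conditioning on $(X,M,T)$ and on $A$, write
\[
\rmst_t(\tstar)=E_{\mathcal T}\Big[\,q_{\mathcal T}[M][t]\,R_{1}(X,M,t)+\big(1-q_{\mathcal T}[M][t]\big)R_{0}(X,M,t)\Big],
\]
where $R_a(X,M,t)=E[\min(Y,\tstar)\mid X,M,T=t,A=a]$. This is linear in each cell of $q_{\mathcal T}$. Its derivative with respect to the cell $q_{\mathcal T}[m][t]$ is
\[
\pm P_{\mathcal T}(M=m)\,E_{\mathcal T}\big[R_1-R_0\mid M=m,T=t\big],
\]
with sign $+$ for $t=1$ and $-$ for $t=0$.

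Under \eqref{eq:pps}, setting $A=1$ multiplies PPS by $e^{\theta_A}$. This is a strict stochastic increase of $Y=S+\mathrm{PPS}$ when $\theta_A>0$, and a strict decrease when $\theta_A<0$. The log-normal errors have full support, so $P(Y<\tstar)>0$ and the truncated mean changes strictly. Hence $R_1-R_0\neq0$ and the derivative is nonzero, provided $P_{\mathcal T}(M=m)>0$. This is the meaning of ``generically''; the only exceptions are degenerate strata.

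\emph{Step 3: consistency for at most one target.} Consider two targets that share $(P(X),\mathcal M)$ but have regimes with $\tau_{\mathcal T}\neq\tau_{\mathcal T'}$. By Step 1, $\hat\tau_w$ has the same probability limit $\Phi$ under both. A single limit cannot equal two distinct values, so $\hat\tau_w$ is consistent for at most one of them.

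The main obstacle is making Step 1 rigorous rather than tautological. One has to fix what ``depends on $q_{\mathcal T}$'' means: the map from the full target triple \eqref{eq:triple} to the estimator factors through $(P_{\mathcal S},X_{\mathcal T})$. One also has to rule out $g$ smuggling in target information, which the hypothesis excludes because $g$ may depend only on $Z_i$. Once this factorization is stated carefully, the rest is the strict-monotonicity argument of Step 2.
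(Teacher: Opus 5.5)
Your proposal is correct and follows the paper's proof sketch. The law-of-large-numbers limit $E_{\mathcal S}[w(X)g(Z)]/E_{\mathcal S}[w(X)]$ involves only $P_{\mathcal S}$ and the target covariates, while raising $q_{\mathcal T}[m][t]$ shifts arm-$t$ mass to $A=1$, rescaling PPS by $e^{\theta_A}$ and moving that arm's RMST. Your Step 2 makes this explicit through linearity in each cell, the coupling argument, and full log-normal support, and it shows something slightly stronger: every single-cell partial derivative is nonzero whenever $\theta_A\neq0$ and $P_{\mathcal T}(M=m)>0$. The paper's measure-zero caveat refers instead to regime changes in both arms' cells at once, where the two arm-level effects on $\tau_{\mathcal T}$ could cancel.
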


\begin{proof}[Proof sketch]
By the law of large numbers
$\hat\tau_w\to\mathbb E_{\mathcal S}[w(X)g(Z)]/\mathbb E_{\mathcal S}[w(X)]$.
The expectation is over the source law $P_{\mathcal S}(Z)$, generated
under $q_{\mathcal S}$, and $w$ is fixed by $P_{\mathcal S}(X)$ and the
target covariates; neither involves $q_{\mathcal T}$. The estimand is the
RMST difference of $Y=S+\mathrm{PPS}$ generated under
\eqref{eq:regime} with $q=q_{\mathcal T}$. Raising $q_{\mathcal T}[m][t]$
moves mass in arm $t$ to $A=1$, which multiplies PPS by $e^{\theta_A}$
and so changes that arm's RMST; a change that differs between arms
therefore moves $\tau_{\mathcal T}$, except on a measure-zero set of
parameter values.
\end{proof}

\noindent The content is that a regime is a conditional distribution
$q[m][t]$ belonging to a trial, not a covariate value belonging to a
patient. Reweighting changes the distribution of covariates; it cannot
change the policy under which patients were treated. A patient with
$A=1$ under \SO{} is not exchangeable with a patient with $A=1$ under
\ST{}, because what determined their subsequent treatment differed.

Adding $A$ to the matching set is no remedy. It conditions on a
post-randomisation mediator, which alters the estimand and forfeits the
protection randomisation provides, and it still does not supply
$q_{\mathcal T}$.

\section{Four adjustment approaches, one failure}
\label{sec:four}
If the obstruction were the parametric form of MAIC's weights, a more
flexible estimator would help. We compared four approaches spanning
distinct function classes. \emph{MAIC} reweights by moment matching, as
above. \emph{Simulated treatment comparison} (STC) fits an outcome
regression in the source and evaluates it at the target covariate
profile; it assumes the regression captures all effect modification.
\emph{Augmented inverse probability weighting} (AIPW) combines the two
and is doubly robust. The \emph{ML density ratio} trains a
gradient-boosted classifier to distinguish source from target
individuals and uses its odds as weights, removing MAIC's log-linear
restriction on the weights. The AIPW estimator is given in
Appendix~\ref{app:aipw}.

\begin{table}[t]\centering
\caption{Bias of four adjustment approaches (months of RMST), mean over
10 replications with standard deviation in parentheses. All four are
functions of source data and target \emph{covariates}; none is a function
of the target regime.}
\label{tab:four}
\begin{tabular}{llrrrr}
\toprule
source & target & MAIC & STC & AIPW & ML ratio \\
\midrule
\SO{} & \SO{} & $+0.115\,(0.12)$ & $-0.223\,(0.12)$ & $-0.008\,(0.15)$ & $+0.162\,(0.13)$ \\
\ST{} & \ST{} & $+0.052\,(0.14)$ & $+0.170\,(0.11)$ & $-0.019\,(0.15)$ & $+0.116\,(0.14)$ \\
\SO{} & \ST{} & $\mathbf{+0.950}\,(0.12)$ & $\mathbf{+0.612}\,(0.12)$ & $\mathbf{+0.828}\,(0.14)$ & $\mathbf{+0.997}\,(0.12)$ \\
\ST{} & \SO{} & $\mathbf{-0.783}\,(0.14)$ & $\mathbf{-0.665}\,(0.11)$ & $\mathbf{-0.855}\,(0.16)$ & $\mathbf{-0.719}\,(0.13)$ \\
\bottomrule
\end{tabular}
\end{table}

All four fail together (Table~\ref{tab:four}). The obstruction is
informational: the regime is not a function of the data these methods
consume, so no estimator operating on that data can recover it, whatever
its capacity.

Double robustness makes the point precise, and sharpens it. AIPW
\eqref{eq:aipw} is consistent if \emph{either} its outcome model $\hat m$
or its weights $w$ are correctly specified. But both are maps out of the covariate space,
\begin{equation}
\hat m:\ \mathcal X\times\{0,1\}\to\mathbb R,
\qquad
w:\ \mathcal X\to\mathbb R_{>0},
\qquad\text{so}\qquad
\partial\hat\tau_{\text{AIPW}}/\partial q_{\mathcal T}=0,
\label{eq:drfails}
\end{equation}
and the protection double robustness affords is orthogonal to the
failure at issue. Formally, writing $\mathcal F$ for misspecification of
functional form and $\mathcal I$ for absence of information,
double robustness gives consistency under $\mathcal F$ and is silent
under $\mathcal I$; the regime shift is an instance of $\mathcal I$.

Appendix~\ref{app:aipw} verifies this numerically: under deliberate
misspecification with matched regimes AIPW bias stays within
$\pm0.08$, while with mismatched regimes and both components correct it
is an order of magnitude larger.

\paragraph{A representation-learning reading.} Standard
non-identifiability in causal representation learning
\citep{scholkopf2021} concerns information present in the observations
but not separable into latent factors. Here it is absent: a sufficient
statistic for \eqref{eq:estimand} includes $q[m][t]$, which trial reports
do not contain, so no architecture or scale can recover it;
Table~\ref{tab:four} is its empirical shadow.

\section{Crossover adjustment targets a regime-free world}
\label{sec:crossover}
Crossover adjustment methods estimate survival as it would have been had
no patient received subsequent therapy. RPSFT \citep{robins1991} assumes
subsequent therapy rescales remaining survival time by a common factor
and estimates that factor by g-estimation; IPCW \citep{robins2000}
censors patients when they switch and reweights those who remain; the
two-stage method \citep{white1999} estimates the effect of switching
among progressors and removes it from their post-progression time. We
implement a two-stage-style estimator. Post-progression survival is
recoverable as $\mathrm{PPS}=Y-S$, and since $A$ enters \eqref{eq:pps}
multiplicatively, $\mathrm{PPS}^{\mathrm{adj}}
=\mathrm{PPS}\,e^{-\theta_A A}$ and $Y^{\mathrm{adj}}=S+
\mathrm{PPS}^{\mathrm{adj}}$, targeting
$\mathbb E[Y\mid do(T{=}t), do(A{=}0)]$.

On its own terms the method works: within trial it recovers the
no-therapy counterfactual in every regime, repairing \SH{}'s $-2.64$
month bias to within $0.05$ (Table~\ref{tab:crossover}, upper). Our
implementation reconstructs PPS from uncensored quantities and is thus a
best case; a real analyst has less.

\begin{table}[t]\centering
\caption{Crossover adjustment. Upper: within trial it recovers the
no-therapy counterfactual. Lower: transported, every regime adjusts to
the same value, because that counterfactual is regime-free.}
\label{tab:crossover}
\begin{tabular}{lrrrr}
\toprule
regime & unadjusted & adjusted & oracle & $A{=}0$ truth \\
\midrule
\SO{} balanced   & $+3.705$ & $+3.821$ & $+3.818$ & $+3.775$ \\
\ST{} offsetting & $+3.702$ & $+3.825$ & $+3.818$ & $+3.775$ \\
\SH{} classic    & $+1.132$ & $+3.783$ & $+3.818$ & $+3.775$ \\
\midrule
source $\to$ target & \multicolumn{2}{r}{oracle-adjusted}
  & target truth & bias \\
\midrule
\SH{} $\to$ \SO{} & \multicolumn{2}{r}{$+3.818$} & $+1.520$ & $+2.298$ \\
\SO{} $\to$ \ST{} & \multicolumn{2}{r}{$+3.818$} & $+0.668$ & $+3.150$ \\
\ST{} $\to$ \SO{} & \multicolumn{2}{r}{$+3.818$} & $+1.520$ & $+2.298$ \\
\bottomrule
\end{tabular}
\end{table}

Transported, it answers a different question from the one asked. All
three regimes adjust to $+3.818$, against target truths of $+0.668$ and
$+1.520$: residual bias of $2.3$ to $3.2$ months, larger than the effects
themselves. The method does not fail to see the regime; it removes it,
and the regime is what distinguishes source from target. Adjustment also
collapses \SO{}, \ST{}, and \SH{} to a common value, destroying the
information that would signal whether transport is safe.

\section{The identified set, and what the regime buys}
\label{sec:idset}
If the regime is unknown, the honest output is a set rather than a point
\citep{manski2003,imbens2004}. Let $\mathcal G(B)$ denote the mechanisms
and $\mathcal Q(B)$ the regimes consistent with background knowledge $B$
and the published data. The identified set and its width are
\begin{equation}
\Psi(B)\;=\;\Big\{\,\tau_{\mathcal T}(\mathcal M, q)\ :\
  \mathcal M\in\mathcal G(B),\ q\in\mathcal Q(B)\,\Big\},
\qquad
W(B)\;=\;\sup\Psi(B)-\inf\Psi(B).
\label{eq:idset}
\end{equation}
Two indeterminacies enter. Mechanisms: only $\lambda_m\delta+\theta_{T,m}$
is visible at fixed $\delta$, so a one-parameter family of
$(\lambda,\theta_T)$ reproduces any observed stratum-specific hazard
ratio. Regimes: the target may plausibly operate under \SO{}, \ST{}, or
\SH{}. We quantify the second alone, holding the mechanism at its true
value. Since that is a subset of $\Psi(B)$, the width we report is a lower
bound on $W(B)$.

\paragraph{The set spans zero.} With the mechanism known exactly, the
true target effect is $+1.49$ months under \SO{}, $+0.66$ under \ST{} and
$-1.26$ under \SH{} (means of 6 replications), so
$\Psi\supseteq[-1.26,+1.49]$, of width $2.75$. Without knowledge of the
target's post-progression landscape the \emph{sign} of the treatment
effect in the target population is undetermined, before any mechanistic uncertainty
is added.

\paragraph{Knowing the regime is sufficient.} The converse also holds in
our simulation. Treating the regime as a policy, a regime-aware estimator
supplied with $q_{\mathcal T}[m][t]$ (g-computation \citep{robins1986} over
the target policy) recovers the target effect
across all five source--target pairs of Table~\ref{tab:transport}, with
mean bias between $-0.05$ and $+0.03$ months, against MAIC biases of
$+0.90$, $-0.87$ and $-2.74$ on the mismatched pairs in the same
replications. Three off-policy estimators---the direct method,
importance weighting on the policy ratio $q_{\mathcal T}/q_{\mathcal S}$,
and their doubly robust combination---give mean biases between $-0.12$ and
$+0.05$, each within about one standard error of zero (6 replications).
They assume a PPS mechanism shared between source and target, and
overlap: the source must treat every stratum--arm cell the target policy
reaches, as it does here.

\section{Conclusions}
\label{sec:implications}
The results reduce to a single statement about what the evidence base
contains. Transport requires $\tau_{\mathcal T}$ as defined in
\eqref{eq:estimand}, a functional of the triple
$(P_{\mathcal T}(X), \mathcal M, q_{\mathcal T})$. Published reports
determine the first argument, constrain the second only up to the
one-parameter family of Section~\ref{sec:idset}, and are silent on the
third. The two correction families in routine use fail on the third
argument in complementary ways:
\begin{equation}
\underbrace{\hat\tau_{\text{adj}}
  \;=\;\tau\big(P_{\mathcal S}(X),\mathcal M, q\equiv 0\big)}
  _{\text{crossover adjustment: sets } q \text{ to zero}}
\qquad\qquad
\underbrace{\hat\tau_{w}
  \;=\;\tau\big(P_{\mathcal T}(X),\mathcal M, q_{\mathcal S}\big)}
  _{\text{population adjustment: retains } q_{\mathcal S}}
\label{eq:twofailures}
\end{equation}
Neither is $\tau(P_{\mathcal T}(X),\mathcal M,q_{\mathcal T})$. One
erases the regime; the other carries the wrong one. Figure~\ref{fig:square}
places the four quantities in the plane of the triple \eqref{eq:triple}:
each correction moves along a single axis and neither reaches the
target's own regime.

\begin{figure}[t]\centering
\begin{tikzcd}[column sep=4.2em, row sep=3.2em, ampersand replacement=\&]
  \& \tau\big(P_{\mathcal T}(X),\,\mathcal M,\,q_{\mathcal T}\big)
     \arrow[d, phantom, ""{coordinate, name=T}]
  \\
  \tau\big(P_{\mathcal S}(X),\,\mathcal M,\,q_{\mathcal S}\big)
     \arrow[r, "\text{reweight } X"]
     \arrow[d, "\text{set } A\equiv 0"']
     \arrow[ur, dashed, "\text{transport}"{sloped, above}]
  \& \tau\big(P_{\mathcal T}(X),\,\mathcal M,\,q_{\mathcal S}\big)
     \arrow[u, dashed, "\text{shift } q_{\mathcal S}\to q_{\mathcal T}"']
  \\
  \tau\big(P_{\mathcal S}(X),\,\mathcal M,\,q\equiv 0\big)
  \&
\end{tikzcd}
\caption{The transport problem as a diagram in the triple
\eqref{eq:triple}. Solid arrows are the moves the two correction
families make: population adjustment changes the first argument and
lands at $\hat\tau_w$; crossover adjustment changes the third and lands
at $\hat\tau_{\text{adj}}$. Dashed arrows are the moves transport
requires. By Proposition~\ref{prop:notreweightable} no function of
source data and target covariates can perform them.}
\label{fig:square}
\end{figure}

\noindent Four consequences follow.

\paragraph{For outcome simulators.} A model generating OS from PFS must
represent $q$ explicitly. A model class unable to express $q[m][t]$
cannot represent the phenomenon at any capacity, and a model fitted to
trial reports selects one point of $\Psi(B)$ while reporting it as the
answer. The honest output is the set, or the point together with the
assumption on $q$ that identifies it.

\paragraph{For validation.} Validating a simulator against published
transported estimates cannot distinguish a model with the regime right
from one with it wrong, since by \eqref{eq:noreweight} every estimator in
Table~\ref{tab:four} reproduces the published quantities while differing
on the target. Validation must target quantities that separate regimes:
post-progression survival and subsequent-therapy rates by stratum and
arm. Where target outcomes are available, a simple diagnostic follows
from regime invariance of PFS: if transported PFS agrees with the target
but transported OS does not, the discrepancy lies on the post-progression
pathway, and the regime is the first suspect.

\paragraph{For analysis now.} Until the regime is reported, a
transported OS estimate should be accompanied by the identified set
\eqref{eq:idset} computed over a support $\mathcal Q(B)$ for the target
regime. That support can be elicited from the target's treatment
landscape---which later-line agents are approved and reimbursed there,
and for which biomarker strata---information that is typically
available even when the trial-level regime is not. Where the resulting
set excludes zero, transport supports a decision despite the missing
regime; where it spans zero, as in our example, the analysis has
identified what must be learned before it can; where $q_{\mathcal T}$ is
known, Section~\ref{sec:idset}'s estimators apply.

\paragraph{For trial reporting.} By Section~\ref{sec:idset} the regime
alone moves the target effect across zero, and knowing it suffices to
recover the effect, so the highest-value change is publishing subsequent
therapy by biomarker stratum \emph{and} arm. Marginal rates are insufficient:
Table~\ref{tab:regimes} exhibits two regimes with indistinguishable
marginals, $+0.0005$ and $-0.0012$, and materially different transport
behaviour. Two clarifications. First, transport needs the regime as a
\emph{policy}, the rate table $q[m][t]$, not each patient's completed
history; it is estimable at data cutoff from progressed patients, with
death before subsequent therapy as a competing risk, without waiting for
OS. Second, the current baseline is low:
post-progression treatment is assessable in under half of registration
trials and about a third of published trials \citep{olivier2023}, and
reporting by stratum and arm is rarer still. Reporting it so raises real
difficulties---small cells, harmonising therapy classes across
jurisdictions, and data-sharing and privacy constraints---of the same
kind as those already managed for subgroup efficacy reporting.

\section{Limitations and responsible use}
All results are from simulations designed to exhibit the mechanism.
They establish that the failure can occur, why no covariate adjustment
repairs it, and that knowing the regime does; they do not establish how often regimes differ across real trials or how
large the bias is in any real comparison. Measuring that requires
subsequent therapy reported by stratum and arm, the very gap identified
here; calibrating the simulation to published by-arm rates is a natural
next step. Parameters are illustrative, indicating scale, not size in any
specific comparison. The weighted
Kaplan--Meier extrapolates flat beyond the last observed time ($30.4$
months), overstating the RMST difference by about $0.16$ months against a
direct Monte Carlo integral; the bias is common-mode and largely cancels
in bias columns. Covariate shifts are moderate, because aggressive shifts
make MAIC fail on its own terms through a low effective sample size. The
crossover adjustment, the STC and AIPW outcome models, and the
regime-aware and off-policy estimators use uncensored outcomes and are
best cases. The set in Section~\ref{sec:idset} holds the mechanism fixed
and so understates the full identified set; it also depends on the
chosen support of regimes.

\paragraph{Responsible use.} This paper presents methodological
contributions to causal inference for cross-trial comparison. All data
are simulated. Nothing here has direct clinical implications, and no
result should inform treatment decisions, clinical practice, drug
approval, or reimbursement policy. No specific trial is named anywhere
in this work; the regimes and parameters are stylized, and any
resemblance to a particular comparison is incidental. The anticipated
misuse is invoking the non-identification results as grounds to reject
cross-trial evidence in general. Sections~\ref{sec:idset} and
\ref{sec:implications} give the mitigation: once the target regime is
known, regime-aware transport recovers the target effect, and the
reporting this requires is feasible.

\paragraph{Reproducibility.} All results are produced by a single seeded
script with a test suite; tabulated results are written to
machine-readable files and the remainder to a console log. The code,
outputs and log are available at \repourl.

\paragraph{Use of large language models.} A large language model was used
to assist with drafting the manuscript, preparing figures, and checking
the reference list. All modelling choices, derivations, code, and reported
numbers are the authors' own; the authors verified them and take full
responsibility for the content.

\bibliographystyle{plainnat}
\bibliography{references}

\appendix
\section{The AIPW estimator and its double robustness}
\label{app:aipw}
The AIPW estimator of Section~\ref{sec:four}, with truncated outcome
$\tilde Y_i=\min(Y_i,\tstar)$, outcome model $m(x,t)$, weights $w_i$,
and randomisation probability $e=1/2$, is
\begin{equation}
\hat\tau_{\text{AIPW}}
= \frac{1}{n_{\mathcal T}}\sum_{j\in\mathcal T}
   \big[\hat m(X_j,1)-\hat m(X_j,0)\big]
 + \frac{1}{n_{\mathcal S}}\sum_{i\in\mathcal S} w_i
   \left(\frac{T_i}{e}-\frac{1-T_i}{1-e}\right)
   \big(\tilde Y_i-\hat m(X_i,T_i)\big).
\label{eq:aipw}
\end{equation}
The first term is STC; the second is a weighted average of source
residuals. If $\hat m$ is correct the residuals have mean zero and the
correction vanishes in expectation; if $w$ is correct the correction
supplies exactly the bias a wrong $\hat m$ introduces. Either suffices,
which is the double robustness property.

\paragraph{Numerical check.} With regimes \emph{matched} and
misspecification deliberately induced, AIPW bias is $-0.077$ with both
components correct, $-0.050$ with the outcome model degraded, $+0.072$
with the weights degraded, and $+0.057$ with both degraded---all within
$\pm 0.08$, confirming the property holds. With regimes
\emph{mismatched} and both components correctly specified, bias is
$+0.828$ and $-0.855$ (Table~\ref{tab:four}): an order of magnitude
larger, in exactly the setting where double robustness offers nothing.
\end{document}